\tikzset{->, >=stealth', shorten >=1pt, auto, node distance=1cm, semithick, baseline=(current bounding box.center)}
\newtheorem{theo}{Theorem}
\newtheorem{remark}[theo]{Remark}
\newtheorem{lemma}[theo]{Lemma}
\newtheorem{prop}[theo]{Proposition}
\newtheorem{corollary}[theo]{Corollary}
\newcommand{\id}{\mathbb{1}}
\newcommand{\N}{\mathbb{N}}
\newcommand{\globalcolor}[1]{%
  \color{#1}\global\let\default@color\current@color
}
\def\p@subsection{}
\definecolor{matrixgreen}{HTML}{00ff41}
\definecolor{lg}{RGB}{200,200,200}
\definecolor{breezedarkblue}{HTML}{212946}
\definecolor{cybergreen}{HTML}{00ff41}
\definecolor{cyberpink}{HTML}{FE53BB}
\newcommand{\figref}[1]{Fig.~\ref{#1}}
\newcommand{\eref}[1]{Eq.~\ref{#1}}
\newcommand{\ben}[1]{{\color{purple}#1}}
\begin{document}

\title{A universal quantum rewinding protocol with an arbitrarily high probability of success}
%\thanks{Thank you note here}

%\author{H. G. Wells}
%\author{J. Verne}
%\author{I. Asimov}\email{my@email.com}
%\author{A. C. Clarke}

%\affiliation{University of Vienna, Faculty of Physics, Boltzmanngasse 5, A-1090 Vienna}
%\affiliation{Institute for Quantum Optics and Quantum Information (IQOQI), Boltzmanngasse 3, A-1090 Vienna}

\author{D. Trillo}
\author{B. Dive}
\author{M. Navascués}

\affiliation{Institute for Quantum Optics and Quantum Information (IQOQI) Vienna, Boltzmanngasse 3, A-1090 Vienna}

\date{\today}

\begin{abstract} %It's too long

    We present a universal mechanism that, acting on any target qubit, propagates it to the state it had T time units before the experiment started. This protocol works by setting the target on a superposition of flight paths, where it is acted on by uncharacterized, but repeatable, quantum operations. Independently of the effect of each of these individual operations on the target, the successful interference of the paths causes it to leap to its past state. We prove that, for generic interaction effects, the system will reach the desired state with probability 1 after some finite number of steps.
    
\end{abstract}

\maketitle

    \begin{quote}
        \emph{If Time is really only a fourth dimension of Space, why is it, and why has it always been, regarded as something different? And why cannot we move in Time as we move about in the other dimensions of Space?}
        
        H.G. Wells, the Time Machine.
    \end{quote}

    %Time dilation is one of the most surprising predictions of special and general relativity. Take a spaceship and set it on a round trip that crosses a region with a strong gravitational field. When the ship returns, its occupants, who will look younger than expected, will claim that the trip actually took fewer time than measured on Earth. By acting on the ship's motion degree of freedom, we have slowed the evolution of its internal degrees of freedom down.
    
    Time dilation is one of the most surprising predictions of special and general relativity. Take a spaceship and set it on a round trip at very high speeds. When the ship returns, its occupants, who will look younger than expected, will claim that the trip actually took less time than measured on Earth. By acting on the ship's motion degree of freedom, we have slowed down the evolution of its internal degrees of freedom. 
    
    Astounding as it sounds, relativistic time dilation is, nevertheless, extremely impractical: one needs enormous amounts of energy or close proximity to a black hole to observe significant effects. Moreover, regarded as a class of time translations, relativistic time warp is \emph{limited}. One can use it to decelerate the flow of time, but not to invert it or accelerate it.
    
    Some of such limitations disappear when we abandon the realm of relativistic classical physics and enter the realm of non-relativistic quantum mechanics. In prior work~\cite{Navascues2017},~\cite{Trillo2019}, we devised what one could call \emph{universal time translation protocols}: heralded physical processes with the property to decelerate, accelerate and even reverse the evolution of the quantum systems where they act. Notably, these processes do not depend on the free Hamiltonian of the target systems or their Hamiltonian interactions with the ancillary systems used to influence them; they just depend on the Hilbert space dimension of the target. This means that the exact same physical process applied to the spin of an electron, an oscillating kaon or the polarization of a photon (all of them two-dimensional), will translate in time each of these very different systems by the same amount. 
    
    All universal processes for time translation are necessarily probabilistic: this is so because they are expected to be sound even when the target does not interact at all with the experimental device, in which case the probability of a successful rewinding cannot be other than zero. What happens, though, when the target only interacts weakly with the experimental device? In all known universal rewinding protocols, one observes that the probability of success, while strictly greater than zero, becomes impractically small. This makes the implementation of such processes (see, e.g., \cite{Gong2019, Li2019}) a mere academic exercise and raises the question of whether quantum physics allows for efficient universal rewinding protocols at all.

    \begin{figure*}[!t]
        \begin{tabular}{cc|ccc|cc}
             (a) \includegraphics[scale = 0.7]{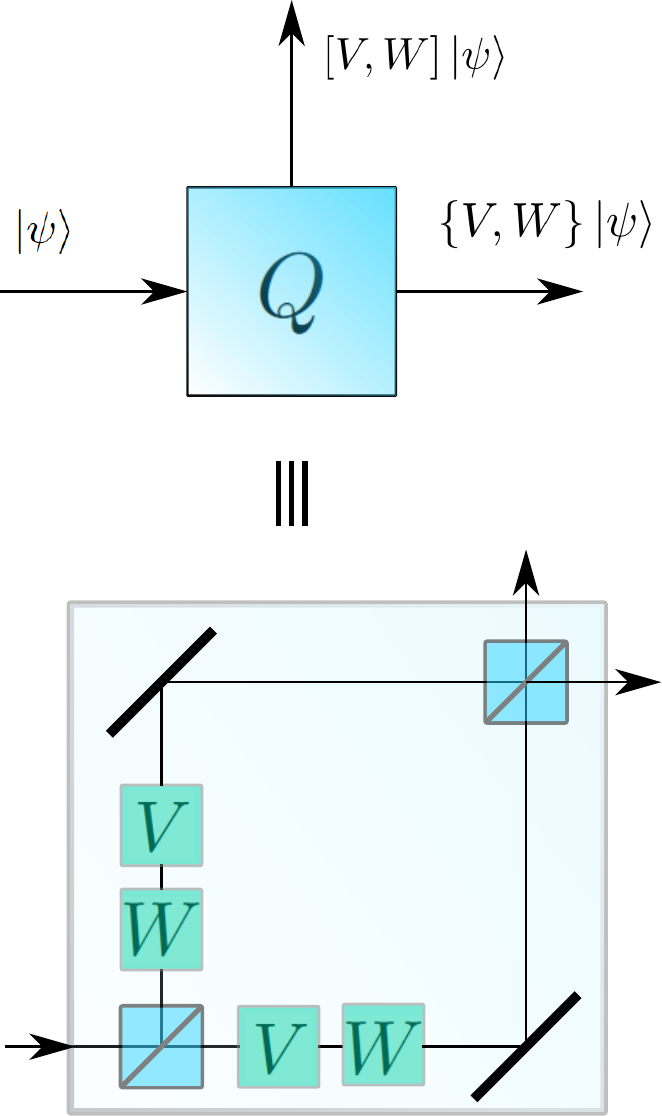} & \hspace{5mm} & \hspace{5mm} &
             (b) \includegraphics[scale = 0.7]{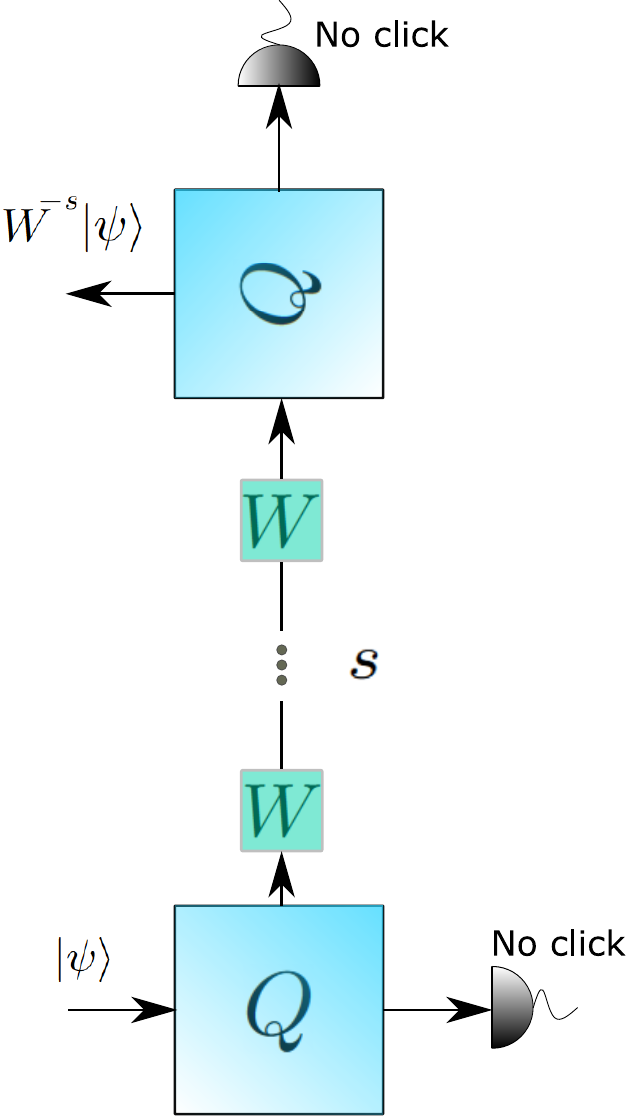} & \hspace{5mm} &  & (c) \includegraphics[scale=0.7]{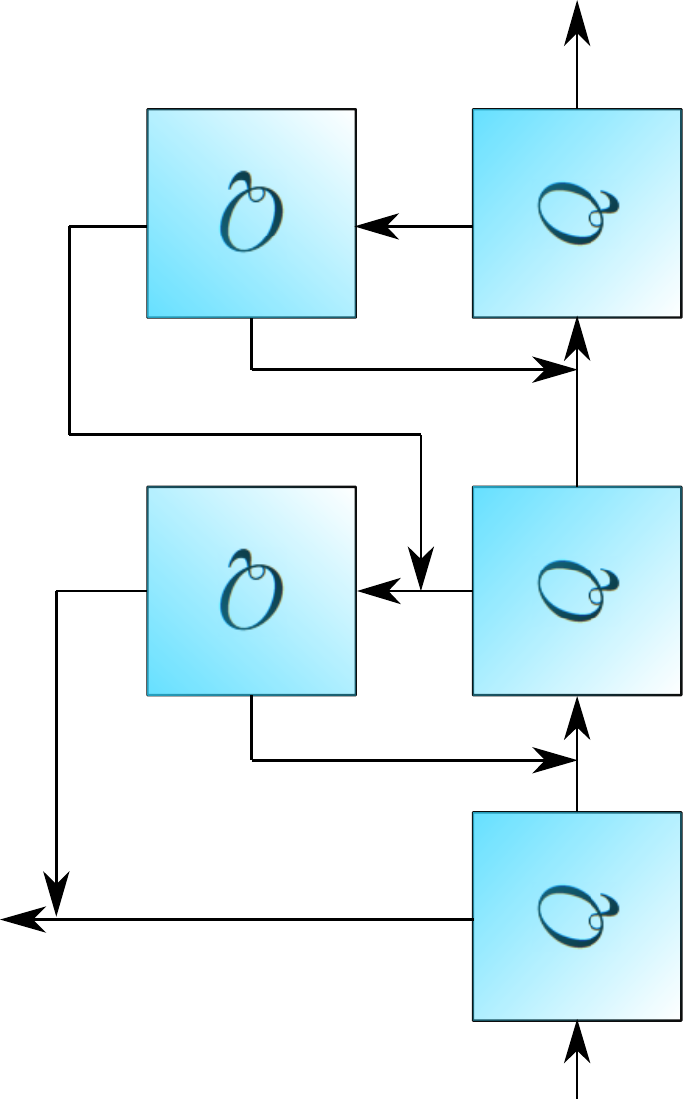}
        \end{tabular}
        \caption{(a) The gate $Q$ specified by Eq.(\ref{building_block}). This quantum operation, the building block of the whole protocol, can be implemented in different ways depending on the physical systems under consideration. The figure shows a possible way to implement $Q$ on photons with an interferometer. (b) A schematic of the full protocol without corrections. (c) A schematic of a level 2 correction, rotated 90 degrees counterclockwise for convenience. This whole figure substitutes each $Q$ in part (b). Following any path in the figure that ends up in the bottom left outputs a state proportional to $[V,W] \ket{\psi}$. By adding more $Q$ gates in the same fashion the success of the scheme becomes more and more probable.}
        \label{fig:qdiagram}
    \end{figure*}  
    
    In this paper, we present a universal rewinding protocol for two-level systems that, in the generic case, rewinds the target with certainty. In analogy with the relativistic case, the protocol requires manipulating the target system's motion degree of freedom; the main difference is that we allow setting the target on a superposition of different flight paths. For some of these paths, the evolution of the target's internal two-level system might not be guided by its (unknown) free Hamiltonian $H_0$, but by some other (also unknown) interacting Hamiltonian $H_I$, e.g., due to the presence of a magnetic field along the path. We make these paths sequentially interfere in non-trivial ways until the system reaches a terminal interference pattern. When that happens, the internal degree of freedom of the target is guaranteed to have experienced the desired rewinding. 
    
    For generic $H_0, H_I$, we prove that, with probability $1$, the system will reach a terminal interference pattern in some finite number of steps. More generally, let $\|[e^{-iH_0\Delta T},e^{-iH_I\tau}]\|>\epsilon$, for some $\Delta T,\tau,\epsilon>0$. Then, for any $0<q<1$, we can find a number $m$ such that, with probability at least $q$, the rewinding protocol will reach a terminal interference pattern in $m$ time-steps or fewer. Any such $m$-trimmed protocol, if successful, will rewind a target by $T$ time units in time at most $T+m(\tau+\Delta T)$, which makes the protocol's running time asymptotically optimal on $T$~\cite{Trillo2019}. The structure of the general rewinding protocol makes it amenable for any experimental setup that allows the target to be acted upon by a beam splitter, and, in fact, a $2$-trimmed version of the protocol has already been successfully implemented with photonic systems \cite{experimental}.

    There exist traditional methods to rewind a quantum system with an unknown free Hamiltonian (such as the refocusing techniques used in nuclear magnetic resonance~\cite{refocusing}), as well as modern methods based on the notion of quantum combs \cite{Quintino2018}. Those protocols are of the form: `act on the target system with the $Z$ Pauli gate, let the system freely evolve for a microsecond, then act on it with the $X$ Pauli gate', etc. They hence require the ability to implement specific operations on the target system (in this case, the Pauli gates). Thus, the physical process used to refocus a specific two-level system, say, the spin of an electron, by means of, e.g., a sequence of magnetic fields, will fail to rewind some other two-level systems, such as the polarization of a photon, or even another, more distant, electron. In contrast, the universal protocols detailed in~\cite{Navascues2017,Trillo2019}, and also in this paper, just require the capacity to interact with the target system in \emph{some} way. The specific effect of this interaction on the target (e.g.: to propagate it through the Hadamard gate) just affects the probability of the heralded success of the process, but not its ultimate outcome, namely, the rewinding of the target. Besides the protocols in~\cite{Navascues2017,Trillo2019}, another example of a universal protocol for time translation can be found in~\cite{sandu}, where the authors combine both quantum theory and general relativity to devise a `time translator', capable of rewinding or fast-forwarding quantum systems. Unfortunately, the probability of success of the time translator is too small to make this device practical or experimentally realisable with current technology.

    %It is worth comparing these results with past works. There exist other quantum information processing proposals to rewind a system with an unknown free Hamiltonian~\cite{refocusing},~\cite{Quintino2018}, ~\cite{Lesovik2019}. These require, however, the ability to implement characterized operations on the target system, and so they are not universal. The work of~\cite{sandu} combines both quantum theory and general relativity to devise a `time translator', capable of rewinding or fast-forwarding quantum systems. While this method time-translates all quantum systems independently of their Hilbert space dimension, it has two drawbacks: (1) it only works approximately, and under a restriction on the free Hamiltonian of the target; (2) if we demand a reasonable precision, the probability of success of the process becomes astronomically small. 

\vspace{10pt}    
\noindent\emph{The Protocol}
    
    In the following, given a  time unit $\Delta T>0$, we introduce a universal physical process that rewinds any two-level quantum system by any amount $T=s\Delta T$, where $s$ is an arbitrary natural number. This process, acting on a target system with free Hamiltonian $H_0$, will propagate the target's initial quantum state by $e^{iH_0s\Delta T}$, thus leaving the target on the state it had $s\Delta T$ time units before the experiment started.
    
    The basic building block of the protocol is the gate $Q$ depicted in \figref{fig:qdiagram}. Denoting by $\ket{\psi}$ the state of the target, this gate performs the transformation
    \begin{equation}
        Q\ket{\psi}\ket{\rightarrow}\propto [V,W]\ket{\psi}\ket{\uparrow}+\{V,W\}\ket{\psi}\ket{\rightarrow}.
        \label{building_block}
    \end{equation}
    \noindent Here $W:=e^{-iH_0\Delta T}$ and $V$ denotes an unknown unitary map, detailed below. The kets $\ket{\rightarrow}, \ket{\uparrow}$ respectively label a left-to-right and a bottom-up trajectory of the target system, as seen in the figure when the letter Q is upright. If, right after implementing $Q$, we measure the target's motion degree of freedom in the $\{\ket{\uparrow},\ket{\rightarrow}\}$ basis, the target will be propagated by either $\{V,W\}$ or $[V,W]$, depending on the measurement result.
    
    We next dedicate some lines to explain how to universally realize the gate $Q$, also known in the literature as \texttt{SWITCH} \cite{Chiribella}, for some uncharacterized matrix $V$. Let $O$ be a repeatable physical operation (e.g.: switching on a magnetic field, releasing an electron) of duration $\tau$, whose effect on the target's internal degree of freedom is to propagate its ket by some unknown operator $V$. This can be achieved by, e.g., switching on an interacting Hamiltonian for some time $\tau$, or, as described in \cite{Navascues2017, Trillo2019}, by making the target unitarily interact with a probe, which is post-selected onto a given pure state after the interaction. Note that $V$ will be a unitary matrix in the first case and non-unitary in the second.
    
    Given the ability to conduct any such operation $O$, one can implement the gate $Q$ by playing with the motion degree of freedom of the target: it suffices to put the latter in an equal superposition of two paths. In the first path, the target is allowed to evolve freely for time $\Delta T$ and then we act on it with $O$ for time $\tau$. In the second path, we first act on the target with $O$ for time $\tau$ and then we let it evolve freely for time $\Delta T$. The state of the target at this stage will thus be proportional to $VW\ket{\psi}\ket{\gamma_1}+WV\ket{\psi}\ket{\gamma_2}$, where $\gamma_1, \gamma_2$ denote the two trajectories. Next we make the two trajectories interfere, by conducting the unitary operation
    \[
    \ket{\gamma_1}\mapsto \frac{1}{\sqrt{2}}\left(\ket{\rightarrow}-\ket{\uparrow}\right), \hspace{5mm} \ket{\gamma_2}\mapsto \frac{1}{\sqrt{2}}\left(\ket{\rightarrow}+\ket{\uparrow}\right)
    \] (in optical systems, this can be achieved with a balanced beam splitter), arriving at \eref{building_block}. 
    
    The gist of the protocol is to apply the gate $Q$ over and over to the target until it reaches a state proportional to $W^{-s}\ket{\psi}=e^{iH_0 s\Delta T}\ket{\psi}$. To achieve this goal, we rely on three general properties of $2\times 2$ matrices.

    \begin{prop}
    \label{fundamental_prop}
    Let $V,W$ be arbitrary $2\times 2$ matrices, and define $x \equiv [V, W], y \equiv \{V, W\}$. Then we have that: 
    \begin{enumerate}[label=(\alph*)]
     \item 
    $x^2\propto\id_2$.
    \item 
    If $W$ is invertible, then, for any natural number $s$, $xW^sx\propto W^{-s}$.
    \item 
    For any natural number $n$, $y^nxy^n\propto x$. 
    \end{enumerate}
    \end{prop}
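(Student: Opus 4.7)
My plan is to reduce all three parts to the Cayley--Hamilton identity for $2\times 2$ matrices, $M^2 = \text{tr}(M)\,M - \det(M)\,\id_2$, together with two trace-vanishing facts that come from the (anti)commutator structure. Part (a) then falls out immediately: since $\text{tr}(x) = \text{tr}([V,W]) = 0$, Cayley--Hamilton gives $x^2 = -\det(x)\,\id_2$, which is proportional to $\id_2$.

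For part (b), I will first derive a general identity valid for any $2\times 2$ matrix $A$: splitting $A$ into its scalar and traceless parts and using the fact that two traceless $2\times 2$ matrices anticommute to a multiple of the identity (a direct corollary of Cayley--Hamilton applied to $\tilde A+\tilde B$), one obtains $\{A,x\} = \text{tr}(A)\,x + \text{tr}(Ax)\,\id_2$. Setting $A = W^s$, the cross-trace $\text{tr}(W^s x) = \text{tr}(W^s VW) - \text{tr}(W^{s+1}V)$ vanishes by cyclicity, so $\{W^s, x\} = \text{tr}(W^s)\,x$. Right-multiplying by $x$ and invoking part (a) gives $xW^s x = \det(x)\bigl(W^s - \text{tr}(W^s)\,\id_2\bigr)$, and one final Cayley--Hamilton application, this time to $W^s$, rewrites the bracket as $-\det(W)^s\,W^{-s}$.

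For part (c), the same general anticommutator identity reduces the task to checking $\text{tr}(xy) = 0$. This is a one-line expansion: $xy = VWVW + VW^2V - WV^2W - WVWV$, and cyclicity of the trace makes the four contributions cancel in pairs. Hence $\{y,x\} = \text{tr}(y)\,x$, i.e.\ $yx = -xy + \text{tr}(y)\,x$. Substituting Cayley--Hamilton $y^2 = \text{tr}(y)\,y - \det(y)\,\id_2$ into $yxy = -xy^2 + \text{tr}(y)\,xy$ telescopes the expression down to $yxy = \det(y)\,x$, and a one-line induction on $n$ then gives $y^n x y^n = \det(y)^n\,x$.

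The only genuinely nontrivial observations are the two vanishing cross-traces $\text{tr}(W^s x) = 0$ and $\text{tr}(xy) = 0$; once those are spotted, everything else is a mechanical consequence of Cayley--Hamilton and I do not foresee any further obstacle.
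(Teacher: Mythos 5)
Your proof is correct, and it takes a genuinely different (and in one respect cleaner) route than the paper's. The paper also starts from Cayley--Hamilton and the same two vanishing traces ($\tr(W^s[V,W])=0$ and $\tr([V,W]\{V,W\}^n)=0$, its Lemma~\ref{lemma_1}), but it packages them differently: it observes that a traceless $2\times 2$ matrix squares to a multiple of $\id$, applies this to $x$, $W^s x$ and $xy^n$ to get $x^2\propto\id$, $W^sxW^sx\propto\id$ and $xy^nxy^n\propto\id$, and then peels off factors by multiplying with $W^{-s}$ (for (b)) or with $x$ and the relation $x^2\propto\id$ (for (c)). That last step divides by $-\det(x)$, so the paper must treat the degenerate case $\det([V,W])=0$ separately via a rank-one argument ($x=\lambda\ketbra{\phi}{\phi^\perp}$, etc.). Your approach instead linearizes Cayley--Hamilton into the anticommutator identity $\{A,x\}=\tr(A)x+\tr(Ax)\id$ and derives the \emph{exact} identities $xW^sx=-\det(x)\det(W)^s\,W^{-s}$ and $yxy=\det(y)\,x$ as polynomial identities, then handles general $n$ in (c) by induction rather than by the paper's direct $\tr(xy^n)=0$ for all $n$. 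This buys you two things: no case analysis for $\det(x)=0$, and explicit proportionality constants (which the paper only remarks ``might vanish for some $V,W$'' without computing them). What the paper's version buys is a slightly more uniform statement --- all three relations flow from the single observation that certain products are traceless --- at the cost of the extra degenerate-case argument. Both proofs are sound.
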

    \noindent We remark that the proportionality factors on the right-hand sides of equations (a-c) are functions of the entries of the matrices $V,W$, and might vanish for some values of $V, W$. The reader can find a  proof of Proposition \ref{fundamental_prop} at the end of the letter.
    
    Proposition \ref{fundamental_prop} suggests a simple method to bring the target system to state $W^{-s}\ket{\psi}$ through consecutive uses of gate $Q$. First, we aim to effect the transformation $\ket{\psi}\to x\ket{\psi}$. Once there, all we have to do is wait for time $s\Delta T$ and manage to enforce the transformation $x$ once more. The final state will then be $xW^sx\ket{\psi}$, that, by relation (b), is proportional to the state $W^{-s}\ket{\psi}=e^{i H_0 s \Delta T}\ket{\psi}$. In that case, the target will have been translated by $-s\Delta T$ time units.
    
    The shortest way to rewind the system hence requires two applications of gate $Q$, see \figref{fig:qdiagram}~(b). Provided that the target system emerges from gate $Q$ through its vertical output port, the system will have been acted upon by $x$. Next, we wait for time $s$ and then we input the system in $Q$ again. If, once more, the target exits the gate through its vertical port, then we can guarantee that the rewinding process took place.
    
    It could happen, though, that the target exits the first gate through its horizontal port. In that case, the system will be propagated by $y$ instead of $x$. To proceed with the rewinding protocol, we must eliminate this operator. A possible path out is given by taking $n=1$ in relation (c), namely, by the identity $yxy\propto x$. It follows that, if we make the system pass through two more $Q$ gates and it exits the first one through the vertical port; and the second one, through the horizontal port, the system will end up in a state proportional to $x\ket{\psi}$. The situation is thus the same as if the target had exited through the vertical port in the original $Q$ gate, see \figref{fig:qdiagram}~(c). Hence we can wait for $s\Delta T$ time units before trying to effect another transformation $x$ on the system.

    By virtue of relations (a), (b), (c) in Proposition \ref{fundamental_prop}, whichever sequence of ports the system happens to exit will propagate the target by an operator of the form $xy^n$ or $y^n$. In the first case, $n$ consecutive exits through the horizontal port of gate $Q$ will propagate the system by $x$. In the second one, a vertical detection, followed by $n$ consecutive horizontal ones, will have the same effect. Hence, no matter how advanced the protocol is, there always exists a chance of bringing the target to the terminal configuration $xW^{s}x\ket{\psi}$, as sketched in \figref{fig:qdiagram}~(c). Note that relations (a)-(c) hold even if the matrices $V,W$ are not unitary. The protocol can thus be used, e.g., to rewind a two-level system undergoing a continuous decay governed by a non-Hermitian Hamiltonian, such as a neutral kaon \cite{kaons}. 
    
    Notice as well that, should we enforce any limit $m$ on the number of times that gate $Q$ can be applied, the running time of the protocol would be upper bounded by $T'=m(\Delta T+\tau) + s\Delta T$ (recall that $\tau$ is the time it takes to implement the operation $O$) On the other hand, the protocol, if successful, would rewind the target system by an amount $T=s\Delta T$. Hence $T'=T+O(1)$ and, by \cite{Trillo2019}, this implies that such an `$m$-trimmed' universal rewinding protocol runs on (asymptotically) minimal time. 
    
    It remains to be seen how likely it is that the (trimmed or untrimmed) protocol succeeds. In principle\ben{\sout{,}} it could be that, even allowing an unlimited number of uses of gate $Q$, the system never reaches a terminal configuration $xW^sx\ket{\psi}$. In this regard, note that, if the physical operation $O$ has no effect whatsoever on the target (namely, if $V=e^{-iH_0\tau}$), then the latter will keep evolving unperturbed, no matter how many times we act on it with the $Q$ gate. More generally, one can see that the rewinding protocol will fail with certainty whenever $[V,W]=0$.
    
    %The condition $[V,W]=0$ requires, however, a high degree of fine-tuning, if the experimental setup can indeed influence the target system, and will be violated by generic interactions $V,W$. One therefore wonders what the chances of success are when the operation $O$ is actually perturbing the system of interest, i.e., when $[V,W]\not=0$.
    
    The condition $[V,W]=0$, violated by generic interactions $V,W$, requires a high degree of fine-tuning if the experimental setup is capable of perturbing the target system at all. One therefore wonders what the chances of success are when $[V,W]\not=0$.
    
    Using techniques from probability theory \cite{martingales}, we prove in the Appendix that, provided that $V,W$ are unitary and $[V,W]\not=0$, the target will reach the pattern $xW^sx\ket{\psi}$ after a random finite number of uses of $Q$ with probability $1$. Moreover, given a lower bound on $\|[V,W]\|$, we show how to compute, for any $0<q<1$, a finite number $m$ for which the corresponding $m$-trimmed protocol is successful with probability at least $q$.

    \begin{figure}[H]
        \centering
        \begin{tabular}{c}
             (a) \includegraphics[scale = 0.65]{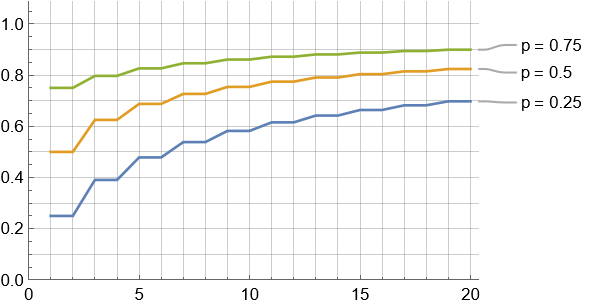} \\
             \\
             \\
             (b) \includegraphics[scale = 0.65]{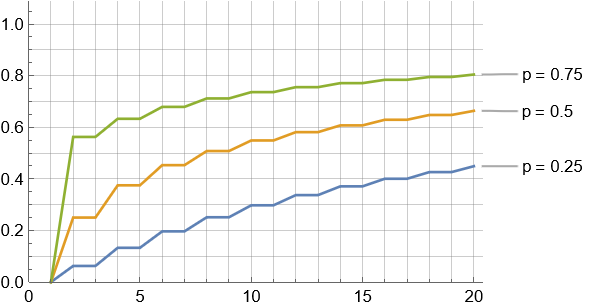}
        \end{tabular}
        \caption{The protocol's probability of success (vertical axis) as a function of the number $m$ of uses of gate $Q$ (horizontal axis), for different values of the probability $p$ of exiting the $Q$ gate through the vertical port (as proven in the Appendix, $p$ depends on $V,W$, but not on the state of the target).
        (a) The probability of successfully implementing $[V,W]$ (b) The probability of successfully rewinding the system for the full adaptive protocol. Note that this is not the square of (a).}
        \label{fig:probability}
    \end{figure}  
    
    \vspace{10pt}
\noindent\emph{Conclusion}

    In this work we have presented a universal time-rewinding mechanism for two-level quantum systems. Unlike previously proposed protocols, ours can reach an arbitrarily high probability of success and is asymptotically optimal in the time required to perform the rewinding operation, thus answering the question of whether such processes are allowed by the laws of quantum mechanics. Since the process is universal, the experimenter performing the protocol does not need any knowledge about the target system, its internal dynamics or even the specifics of the interaction between the system and the experimenter (other that there is one). 
    
    An immediate open question is whether our results can be generalized to higher dimensions. For any $d$, are there universal rewinding mechanisms for $d$-dimensional systems with an arbitrarily high probability of success? Or, on the contrary, does there exist a pair of $d$-dimensional unitaries $V,W$ for which universal rewinding can be achieved, but with a bounded probability of success? 
    
    Another topic worth exploring is whether our techniques can be adapted to devise more efficient time-translation mechanisms for \emph{ensembles} of two-level systems. As shown in \cite{Trillo2019}, there exist universal mechanisms which, acting on an ensemble of $n$ identical $d$-level systems for time $T+O(1)$, have the effect of time-translating one of the systems $nT$ time units to the future, while leaving the remaining $n-1$ systems in the state they had at the beginning of the experiment (a phenomenon we call time transfer). Even for two-dimensional systems, known time transfer mechanisms typically exhibit very low probabilities of success \cite{Trillo2019}; so low, in fact, that their experimental implementation is out of the question. The relations (a)-(c) identified for $2\times 2$ matrices, together with methods from probability theory, might help overcome this difficulty.

\vspace{10pt}       
\noindent\emph{Acknowledgements}

    D.T. is a recipient of a DOC Fellowship of the Austrian Academy of Sciences at the Institute of Quantum Optics and Quantum Information (IQOQI), Vienna. D.T. thanks Claudio Procesi for providing a shorter proof of Lemma \ref{lemma_1}, and Edgar A. Aguilar and David Martínez-Rubio for useful discussions about martingales and probability theory. This work was not funded by the European Research Council.

\vspace{10pt}

\noindent\emph{Proof of Proposition \ref{fundamental_prop}}

    The first ingredient of the proof is the following lemma:
     
    \begin{lemma} \label{lemma_1}
        Let $V,W$ be $2 \times 2$ matrices and $n\in \N \cup \{0\}$. Then, $\tr([V,W]\{V,W\}^n) = 0$.
    \end{lemma}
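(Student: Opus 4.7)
The plan is to reduce the computation to a simple orthogonality statement in the Pauli basis. Any $2\times 2$ matrix $M$ can be written as $M = m_0 \id + \vec m\cdot\vec\sigma$. Substituting this parametrization for $V$ and $W$ and using the Pauli algebra identities $\sigma_i\sigma_j = \delta_{ij}\id + i\epsilon_{ijk}\sigma_k$, a short calculation gives the compact forms
\begin{equation}
    x = [V,W] = 2i\,(\vec v\times \vec w)\cdot\vec\sigma,
\end{equation}
\begin{equation}
    y = \{V,W\} = c\,\id + \vec b\cdot\vec\sigma,
\end{equation}
where $c = 2(v_0 w_0 + \vec v\cdot\vec w)$ and $\vec b = 2(v_0\vec w + w_0\vec v)$. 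Notice that $x$ has no component along $\id$, and that its Pauli vector $\vec a := \vec v\times\vec w$ is, by definition of the cross product, perpendicular to both $\vec v$ and $\vec w$; in particular $\vec a\cdot\vec b = 0$.

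Next I would expand $y^n$ using the fact that $y = c\id + \vec b\cdot\vec\sigma$ and that the square of a traceless $2\times 2$ matrix is central: $(\vec b\cdot\vec\sigma)^2 = |\vec b|^2\id$. Binomially expanding shows that every power $y^n$ is of the form $\alpha_n\id + \beta_n\,\vec b\cdot\vec\sigma$ for some scalars $\alpha_n,\beta_n$ depending only on $c$ and $|\vec b|$. Consequently,
\begin{equation}
    \tr(xy^n) = \alpha_n\tr(x) + \beta_n\tr\bigl(x\,\vec b\cdot\vec\sigma\bigr).
\end{equation}

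To finish, both terms on the right vanish. The first does because $x$ is a commutator (or, equivalently, because it has zero $\id$-component in the Pauli expansion). The second does because $\tr(\sigma_i\sigma_j) = 2\delta_{ij}$ implies $\tr\bigl((\vec a\cdot\vec\sigma)(\vec b\cdot\vec\sigma)\bigr) = 2\vec a\cdot\vec b$, which is zero by the orthogonality observed above. Hence $\tr(xy^n)=0$ for every $n\in\N\cup\{0\}$.

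The only potentially delicate step is the algebraic manipulation leading to the Pauli forms of $x$ and $y$, but this is a routine computation with the Pauli commutation/anticommutation relations; there is no genuine obstacle. If one prefers to avoid Pauli coordinates altogether, the same argument runs by writing $V = \alpha\id + V_0$, $W = \beta\id + W_0$ with $V_0,W_0$ traceless, observing that $[V,W]=[V_0,W_0]$ is traceless, that $\{V,W\} = \mu\id + 2\beta V_0 + 2\alpha W_0$ for some scalar $\mu$, and invoking the Cayley--Hamilton identity $A^2\propto\id$ for any traceless $2\times 2$ matrix $A$, combined with the cyclic identities $\tr([V_0,W_0]V_0)=\tr([V_0,W_0]W_0)=0$.
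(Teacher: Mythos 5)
Your proof is correct, including for arbitrary (non-Hermitian) $V,W$, but it takes a more computational route than the paper. The paper's proof is two lines: by Cayley--Hamilton, $\{V,W\}^n$ for $n\ge 2$ is a linear combination of $\id$ and $\{V,W\}$, so the claim reduces to the cases $n=0,1$, both of which follow from cyclicity of the trace (e.g.\ $\tr(VWVW)=\tr(WVWV)$ and $\tr(VW^2V)=\tr(WV^2W)$). You perform the same structural reduction --- your observation that $y^n=\alpha_n\id+\beta_n\,\vec b\cdot\vec\sigma$ is exactly the Cayley--Hamilton step in Pauli coordinates --- but you verify the base case geometrically, via the orthogonality $(\vec v\times\vec w)\cdot(v_0\vec w+w_0\vec v)=0$, rather than by cycling traces. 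What your version buys is an explicit formula for $x$ and $y$ in the Pauli basis, which makes the vanishing visibly a statement about the cross product being orthogonal to its factors; what it costs is the preliminary algebra, plus one small imprecision you should fix: since $V,W$ are arbitrary complex matrices, $\vec b$ is a complex vector and $(\vec b\cdot\vec\sigma)^2=(\vec b\cdot\vec b)\,\id$ with the \emph{bilinear} (not sesquilinear) dot product, so you should write $\vec b\cdot\vec b$ rather than $|\vec b|^2$; the argument is unaffected because both the identity $\tr\bigl((\vec a\cdot\vec\sigma)(\vec b\cdot\vec\sigma)\bigr)=2\,\vec a\cdot\vec b$ and the orthogonality $\vec a\cdot\vec b=0$ are bilinear identities valid over $\mathbb{C}$.
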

    \begin{proof}
    By the Caley-Hamilton theorem, for $n\geq 2$, the $2\times 2$ matrix $\{V,W\}^n$ is a linear combination of $\id, \{V,W\}$. Hence, it is enough to show that the lemma holds for $n=0,1$, and this is a simple consequence of the cyclicity of the trace.
    \end{proof}
    
    Now, note that any $2 \times 2$ traceless matrix can be written as a linear combination of the Pauli matrices $\sigma_X, \sigma_Y, \sigma_Z$, and thus its square is proportional to the identity matrix. This applies to the commutator $[V,W]$, the matrix polynomial $W^s[V,W]$ and, by Lemma \ref{lemma_1}, to $[V,W]\{V,W\}^n$. We thus have that, for all $2\times 2$ matrices,
    
    \begin{align}
        &[V,W]^2\propto \id,\nonumber\\
        &W^s[V,W]W^s[V,W]\propto\id,\nonumber\\
        &[V,W]\{V,W\}^n[V,W]\{V,W\}^n\propto \id.
        \label{consequences}
    \end{align}
    
    If $W$ is invertible, then we can multiply the second expression by $W^{-s}$ on the left and arrive at the identity
    
    \begin{equation}
        [V,W]W^s[V,W]\propto W^{-s}.
        \label{reverse}
    \end{equation}
    
    Similarly, multiplying the third line of \eref{consequences} by $[V,W]$ on the left and invoking the first line, we arrive at 
    
    \begin{equation}
        \{V,W\}^n[V,W]\{V,W\}^n\propto [V,W].
        \label{complicated}
    \end{equation}
    
    Note that the last step is only rigorous if the proportionality factor in the expression $[V,W]^2\propto\id$ is non-zero. As it turns out, by the Cayley-Hamilton theorem, this factor is $-\mbox{det}([V,W])$. Let us then prove that the relation also holds for $\mbox{det}([V,W])=0$. 
    
    Define the matrices $x \equiv [V,W]$, $y \equiv \{V,W\}$ and $z \equiv y^nxy^n$. The matrices $x$ and $z$ have in this case rank at most $1$. Since both $x,z$ have zero trace ($z$, by virtue of Lemma \ref{lemma_1}), it follows that one can write them as $x=\lambda \ketbra{\phi}{\phi^\perp}$, $z=\nu\ketbra{\varphi}{\varphi^\perp}$, where $\braket{\phi}{\phi^\perp}=\braket{\varphi}{\varphi^\perp}=0$. Now, by the third line of \eref{consequences}, $xz\propto\id$. Since the left-hand side of the relation has rank at most one, it follows that $xz=0$. This is only possible if $\nu=0$ (note that $\lambda=0$ implies $\nu=0$), in which case $z=0\propto x$; or if $\lambda,\nu\not=0$ and $\braket{\phi^\perp}{\varphi}=0$, from which $z\propto x$. In either case, relation (\ref{complicated}) holds.
    
    The first line of \eref{consequences}, \eref{reverse} and \eref{complicated} are, respectively, the $2\times 2$ matrix relations (a), (b), (c) claimed to hold in Proposition \ref{fundamental_prop}. This finishes the proof.

%\nocite{*}

\bibliographystyle{unsrt}
\bibliography{bibliography}

\begin{thebibliography}{10}

\bibitem{Navascues2017}
Miguel Navascués.
\newblock Resetting uncontrolled quantum systems.
\newblock {\em Phys. Rev. X 8, 031008 (2018)}, October 2017.

\bibitem{Trillo2019}
David Trillo, Benjamin Dive, and Miguel Navascués.
\newblock Translating uncontrolled systems in time.
\newblock {\em Quantum 4, 374 (2020)}, March 2019.

\bibitem{Gong2019}
Ming Gong, Feihu Xu, Zheng-Da Li, Zizhu Wang, Yu-Zhe Zhang, Yulin Wu, Shaowei
  Li, Youwei Zhao, Shiyu Wang, Chen Zha, Hui Deng, Zhiguang Yan, Hao Rong,
  Futian Liang, Jin Lin, Yu~Xu, Cheng Guo, Lihua Sun, Anthony~D. Castellano,
  Chengzhi Peng, Yu-Ao Chen, Xiaobo Zhu, and Jian-Wei Pan.
\newblock Verification of a resetting protocol for an uncontrolled
  superconducting qubit.
\newblock {\em npj Quantum Information 6, 99 (2020)}, November 2019.

\bibitem{Li2019}
Zheng-Da Li, Xu-Fei Yin, Zizhu Wang, Li-Zheng Liu, Rui Zhang, Yu-Zhe Zhang,
  Xiao Jiang, Jun Zhang, Li~Li, Nai-Le Liu, Xiao-Bo Zhu, Feihu Xu, Yu-Ao Chen,
  and Jian-Wei Pan.
\newblock Photonic realization of quantum resetting.
\newblock {\em Optica 7, 766 (2020)}, November 2019.

\bibitem{experimental}
P.~Schiansky, T.~Str\"{o}mberg, D.~Trillo, V.~Saggio, B.~Dive, M.~Navascu\'es,
  and P.~Walther.
\newblock Demonstration of deterministic time-reversal for universal quantum
  processes.
\newblock {\em to appear}, 2022.

\bibitem{refocusing}
Imdad~SB Sardharwalla, Toby~S Cubitt, Aram~W Harrow, and Noah Linden.
\newblock Universal refocusing of systematic quantum noise.
\newblock {\em arXiv preprint arXiv:1602.07963}, 2016.

\bibitem{Quintino2018}
Marco~Túlio Quintino, Qingxiuxiong Dong, Atsushi Shimbo, Akihito Soeda, and
  Mio Murao.
\newblock Reversing unknown quantum transformations: Universal quantum circuit
  for inverting general unitary operations.
\newblock {\em Phys. Rev. Lett. 123, 210502 (2019)}, October 2018.

\bibitem{sandu}
Yakir Aharonov, Jeeva Anandan, Sandu Popescu, and Lev Vaidman.
\newblock Superpositions of time evolutions of a quantum system and a quantum
  time-translation machine.
\newblock {\em Phys. Rev. Lett.}, 64:2965--2968, Jun 1990.

\bibitem{Chiribella}
Giulio Chiribella, Giacomo~Mauro D’Ariano, Paolo Perinotti, and Benoit
  Valiron.
\newblock Quantum computations without definite causal structure.
\newblock {\em Physical Review A}, 88(2):022318, 2013.

\bibitem{kaons}
P.K. Kabir.
\newblock {\em The CP Puzzle: Strange Decays of the Neutral Kaon}.
\newblock Project (EMF Consultants (Tas.) Pty Ltd.)). Academic Press, 1968.

\bibitem{martingales}
David Williams.
\newblock {\em Probability with martingales}.
\newblock Cambridge university press, 1991.

\end{thebibliography}
\appendix

\section{Completeness of the rewinding protocol}\label{appendix: the protocol}

    In this section we prove that the rewinding protocol sketched in the main text is complete, in the sense that, provided that $V,W$ are unitary matrices and $[V,W]\not=0$, the protocol will always halt. 
    
    The first step is to prove that, for $V, W$ unitary (and $2\times 2$), the probability that a quantum state $\ket{\psi}$ exits the horizontal port of gate $Q$ is independent of $\ket{\psi}$. To see this, we apply the Cailey-Hamilton theorem to the $2\times 2$ matrix $VWV^\dagger W^\dagger$, obtaining
    
    \begin{align}
        &(VWV^\dagger W^\dagger)^2-\mbox{tr}(VWV^\dagger W^\dagger)VWV^\dagger W^\dagger+\nonumber\\
        &\mbox{det}(VWV^\dagger W^\dagger)=0,
    \end{align}
    The last determinant equals $1$. Thus, Multiplying the above expression by $W V$ on the left and by $W^\dagger V^\dagger$ on the right, we find that
    
    \begin{equation}
        V^\dagger W^\dagger VW+ W^\dagger V^\dagger WV=\mbox{tr}(VWV^\dagger W^\dagger)\id.
        \label{constant_prob}
    \end{equation}
    
    \noindent Now, the probability $p$ that a state $\ket{\psi}$ leaves the $Q$ gate through the horizontal port is 
    
    \begin{align}
        &\frac{1}{4}\bra{\psi}[V,W]^\dagger [V,W]\ket{\psi}=\nonumber\\
        &\frac{1}{4}\bra{\psi}2\id - V^\dagger W^\dagger VW- W^\dagger V^\dagger WV\ket{\psi}.
    \end{align}
    By \eref{constant_prob}, the last expression just depends on the invariant $\mbox{tr}(VWV^\dagger W^\dagger)$ and not on the state itself.
    
    The independence of the quantum state of the probability of exiting the $Q$ gate from either port, together with relations (a)-(c) from Proposition $1$ in the main text, allows us to model the evolution of the target state, as it is acted sequentially with the gate $Q$, as a classical particle undergoing a random walk in the directed graph shown in \figref{fig: graph}.
    
    \begin{figure}[H]
        \centering
        \begin{tikzpicture}[state/.style={circle, draw, minimum size=1cm}]
         \node[state] (0down) {$\underline{0}$};
         \node[state] (1down) [right of = 0down, node distance = 2cm] {$\underline{1}$};
         \node[] (down) [right of = 1down, node distance = 2cm] {$\cdots$};
         \node[state] (0up) [above of = 0down, node distance = 2cm] {$\overline{0}$};
         \node[state] (1up) [above of = 1down, node distance = 2cm] {$\overline{1}$};
         \node[] (up)[above of = down, node distance = 2cm]{$\cdots$};
         \node[state](-1down) [left of = 0down, node distance = 2cm]{$-\underline{1}$};
         \node[state](-1up)[above of = -1down, node distance = 2cm]{$-\overline{1}$};
         \node[](left)[left of = -1down, node distance=2cm]{$\cdots$};
         \node[](leftup)[left of =-1up, node distance=2cm]{$\cdots$};

         \draw[->] (0down) to node {} (1down);
         \draw[-] (0down) to node {} (0up);
         \draw[->] (1down) to node {} (down);
         \draw[-] (1down) to node {} (1up);
         \draw[->] (1up) to node {} (0up);
         \draw[->] (-1down) to node {} (0down);
         \draw[-] (-1down) to node {} (-1up);
         \draw[->] (0up) to node {} (-1up);
         \draw[->] (up) to node {} (1up);
         \draw[->] (left) to node {} (-1down);
         \draw[->] (-1up) to node {} (leftup);
        
        \end{tikzpicture}
    
        \caption{\textbf{A random walk modelling the word problem.} In this graph, starting in state $\underline{0}$ at $t=0$, at each discrete step the classical particle moves in the vertical direction with probability $p$ and in the horizontal direction with probability $1-p$. The goal is to get back to $\underline{0}$ at a positive time. Consequently, a move in the horizontal direction corresponds to the operation $\ket{\psi} \mapsto y \ket{\psi}$, and a move in the vertical direction to the operation $\ket{\psi} \mapsto x \ket{\psi}$.}
        \label{fig: graph}
    \end{figure}
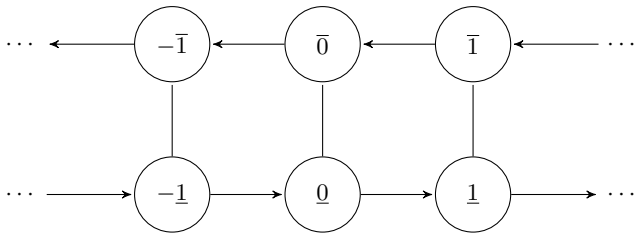
    
    At each time-step, the particle can move in the horizontal direction with probability $1-p$: this corresponds to the quantum target leaving a $Q$ gate through the horizontal port, hence propagating the current quantum state by the operator $y$. Alternatively, with probability $p$, the classical particle will move in the vertical direction of the graph. This corresponds to the target leaving the $Q$ gate through the vertical port, which propagates its state by $x$. If the initial position of the classical particle is $\underline{0}$, then, by the time the particle reaches $\overline{0}$, the quantum target system will have been propagated by $x$. This is independent of the graph path taken by the classical particle, by virtue of relations (a)-(c) in the main text.
    
    Once the classical particle is in $\overline{0}$, we would stop the random walk momentarily and let the quantum target system evolve freely for $s\Delta T$ time units. Then we would act again with the $Q$ gate on the system, thus continuing the random walk until the classical particle arrives at $\underline{0}$, at which point the target has been propagated by $xW^{-s} x$ and hence it would have been rewound.
    
    We will next prove that the probability that the classical particle passes through $\overline{0}$ and then $\underline{0}$ is $1$ provided that $p\not=0$. Note that any trajectory that ends up again in $\underline{0}$ will necessarily pass through $\overline{0}$, so we just need to compute the probability that the classical particle makes a closed trajectory. Since the waiting time does not affect the probability of success of the protocol, we set $s=0$ for the remainder of the discussion.
    
    Let $\{S_n\}_{n\geq 0}$ be the sequence of random variables which describe this random walk when starting in the state $S_0$. We define the hitting time of node $a$ from state $b$ as the random variable
    \[
    T_{b\rightarrow a}:=\inf\{n>0 \hspace{1mm} | \hspace{1mm} S_n= a, S_0=b\},
    \]
    which takes values in $\mathbb{N}\cup \{+\infty\}$. The probability of successfully finishing the protocol in $m$ steps is therefore given by $P(T_{\underline{0}\rightarrow \underline{0}}=m)$. To calculate this probability we will exploit the symmetries of the graph. For the reader not familiar with some of the concepts used in the proof, we recommend~\cite{martingales}.
    
    \begin{theo} \label{theorem}
        For all natural $m\geq 1$ and all $p\in [0,1]$, with the convention that $0^0=1$, 
        \begin{align*}
            &P\left(T_{\underline{0}\rightarrow \overline{0}}= 2m-1\right) =\\
             &=\sum_{n=1}^m (-1)^{n+1}\binom{1/2}{n} \binom{1-2n}{m-n} (2p)^{2n-1}(2p-1)^{m-n},\\
            &P\left(T_{\underline{0}\rightarrow \overline{0}}= 2m\right) =0.
        \end{align*}
    \end{theo}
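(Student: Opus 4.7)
The plan is to set up generating functions for the first-passage times of the random walk of \figref{fig: graph}, reduce the resulting system to a closed-form algebraic expression for $F(z) := E[z^{T_{\underline{0}\to\overline{0}}}]$, and then extract $[z^{2m-1}]F(z)$ through a two-stage binomial expansion.

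The vanishing statement $P(T_{\underline{0}\to\overline{0}}=2m)=0$ is a parity argument: any walk from $\underline{0}$ to $\overline{0}$ must contain $n_U$ up-moves and $n_D=n_U-1$ down-moves (since it starts on the bottom row and ends on the top), together with equal numbers $n_B=n_T$ of bottom-rightward and top-leftward moves (its net horizontal displacement must vanish). Its total length $2n_B+2n_D+1$ is therefore always odd.

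For the closed form I introduce the auxiliary generating functions $G(z):=E[z^{T_{\underline{1}\to\overline{0}}}]$ and $H(z):=E[z^{T_{\overline{1}\to\overline{0}}}]$. Because the horizontal edges on the bottom row point rightward and those on the top row leftward, a walk starting at $\underline{k}$ with $k\ge 1$ cannot return to $\underline{k-1}$ and must pass through $\overline{k-1}$ before reaching $\overline{0}$; combined with the translational symmetry of the graph, this yields convolution identities such as $T_{\underline{2}\to\overline{0}} \stackrel{d}{=} T_{\underline{1}\to\overline{0}}+T_{\overline{1}\to\overline{0}}$. A first-step analysis then produces the small system
\begin{align*}
F(z) &= pz + (1-p)z\,G(z),\\
G(z) &= pz\,H(z) + (1-p)z\,G(z)H(z),\\
H(z) &= (1-p)z + pz\,G(z).
\end{align*}
Eliminating $G$ and $H$ leaves a quadratic whose discriminant factors cleanly as $(1-z^2)\bigl(1-(2p-1)^2z^2\bigr)$; selecting the branch with $F(0)=0$ yields
\[
F(z)=\frac{1+(2p-1)z^2-\sqrt{(1-z^2)\bigl(1-(2p-1)^2z^2\bigr)}}{2pz}.
\]

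The decisive rewriting is the substitution $w:=2pz/\bigl(1+(2p-1)z^2\bigr)$; a short calculation shows $1-w^2=(1-z^2)\bigl(1-(2p-1)^2z^2\bigr)/\bigl(1+(2p-1)z^2\bigr)^2$, whence $F(z)=\bigl(1-\sqrt{1-w^2}\bigr)/w$. The Newton binomial series gives $F(z)=\sum_{n\ge 1}(-1)^{n+1}\binom{1/2}{n}w^{2n-1}$; substituting $w^{2n-1}=(2pz)^{2n-1}\bigl(1+(2p-1)z^2\bigr)^{1-2n}$ and expanding the last factor binomially produces a double sum in $z$. Only odd powers of $z$ appear (reconfirming the parity statement), and the coefficient of $z^{2m-1}$, obtained by setting $k=m-n$ in the inner binomial sum, is exactly the claimed expression. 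The main obstacle is spotting the substitution for $w$ and verifying the factorisation $\bigl(1+(2p-1)z^2\bigr)^2-4p^2z^2=(1-z^2)\bigl(1-(2p-1)^2z^2\bigr)$; once these are in hand, the rest is mechanical.
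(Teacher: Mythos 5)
Your proposal is correct and follows essentially the same route as the paper: a first-step analysis plus the graph's translation/reflection symmetries yields the same quadratic $pzF^2-\bigl(1+(2p-1)z^2\bigr)F+pz=0$, the same branch is selected, and your substitution $w=2pz/\bigl(1+(2p-1)z^2\bigr)$ is exactly the grouping the paper uses when it expands the square root. The only additions are cosmetic but welcome: the explicit parity argument for $P(T_{\underline{0}\rightarrow\overline{0}}=2m)=0$ (which the paper only reads off from the series) and the clean factorisation of the discriminant as $(1-z^2)\bigl(1-(2p-1)^2z^2\bigr)$.
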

    
    \begin{remark} \label{p=1/2}
        The use of $0^0=1$ in Theorem \ref{theorem} is justified in two ways. First, when $p\neq 1/2$ there is no indeterminacy in the formula, which is a continuous function of $p$. Taking the limit $p\rightarrow 1/2$ we obtain
        \begin{equation} \label{symmetric random walk}
            P\left( T_{\underline{0}\rightarrow \overline{0}}= 2m-1 \right)= (-1)^{m+1} \binom{1/2}{m},
        \end{equation}
        so regarding $0^0$ as $1$ is the natural choice to make the probability continuous on $p$. Furthermore, equation \eref{symmetric random walk} is also the probability for the hitting time of the origin in a simple symmetric random walk of the integers starting at the origin. Closer inspection of our graph reveals that indeed, when $p=1/2$ (and only in this case), these two processes are equivalent for the purposes of this random variable. This is therefore the correct formula for $p=1/2$.
    \end{remark}
    
    We proceed to prove Theorem \ref{theorem}.
    
    \begin{proof}
        
        We define the generating function
        \[
            f\left(\alpha\right):= \mathbb{E}\left[\alpha^{T_{\underline{0}\rightarrow \overline{0}}}\right] = \sum_{n=1}^\infty  P\left(T_{\underline{0}\rightarrow \overline{0}}=n\right) \alpha^n.
        \]
        Note that in principle this is only correctly defined for $\alpha < 1$, as the probability of having an infinite hitting time could be non-zero. The correctness of the last equality is justified a posteriori, when we will see that $\lim_{\alpha\rightarrow 1^-}f(\alpha)=1.$ For the time being, assume that $0<\alpha <1$.
        We have
        \begin{align*}
            &f\left(\alpha\right)=\\
            &=  \mathbb{E}\left[\alpha^{T_{\underline{0}\rightarrow \overline{0}}} \hspace{1mm} | \hspace{1mm} S_1= \overline{0} \right] P\left(S_1=\overline{0}\right) + \\
            &+  \mathbb{E}\left[\alpha^{T_{\underline{0}\rightarrow \overline{0}}} \hspace{1mm} | \hspace{1mm} S_1=\underline{1} \right] P\left(S_1=\underline{1}\right)\\
            &= p \mathbb{E}\left[\alpha^{T_{\underline{0}\rightarrow \overline{0}}} \hspace{1mm} | \hspace{1mm} S_1= \overline{0} \right] + \left(1-p\right) \mathbb{E}\left[\alpha^{T_{\underline{0}\rightarrow \overline{0}}} \hspace{1mm} | \hspace{1mm} S_1=\underline{1} \right]\\
            &= p \alpha + \left(1-p\right) \mathbb{E}\left[\alpha^{T_{\underline{0}\rightarrow \overline{0}}} \hspace{1mm} | \hspace{1mm} S_1=\underline{1} \right].
        \end{align*}
        
        However, the graph is invariant under horizontal translations, so  
        \begin{align*}
            P\left(T_{\underline{0}\rightarrow \overline{0}}=n \hspace{1mm} | \hspace{1mm} S_1=\underline{1}\right) &=P\left(T_{\underline{1}\rightarrow \overline{0}}=n-1\right) \\
            &= P\left(T_{\underline{0}\rightarrow -\overline{1}}= n-1 \right).
        \end{align*}
        Therefore,
        \begin{align*}
            \mathbb{E}\left[\alpha^{T_{\underline{0}\rightarrow \overline{0}}} \hspace{1mm} | \hspace{1mm} S_1=\underline{1} \right] &= \sum_{n=2}^\infty P\left(T_{\underline{1}\rightarrow \overline{0}}=n-1 \right) \alpha^n \\
            & = \alpha \mathbb{E}\left[\alpha^{T_{\underline{0}\rightarrow -\overline{1}}} \right].
        \end{align*}
        We can divide the process of getting to $-\overline{1}$ from $\underline{0}$ in two parts: by going for the first time to $\overline{0}$ from $\underline{0}$ and then visiting $-\overline{1}$ from $\overline{0}$ also for the first time. The probabilities are decomposed as follows:
        \begin{align*}
            &P\left(T_{\underline{0}\rightarrow \overline{-1}}=n \right) =\\
            &=P\left( \inf \{k> T_{\underline{0}\rightarrow \overline{0}} \hspace{1mm} | \hspace{1mm}S_k= -\overline{1}, S_{T_{\underline{0}\rightarrow \overline{0}}}=\overline{0},  S_0=\underline{0} \} =n \right)\\
            &=\sum_{m< n} P\left(T_{\underline{0}\rightarrow \overline{0}}=m\right) P\left(\inf\{k>m \hspace{1mm} | \hspace{1mm} S_k= -\overline{1}, S_m=\overline{0} \} = n\right)\\
            &= \sum_{m<n} P\left(T_{\underline{0}\rightarrow \overline{0}}=m\right) P\left(T_{\overline{0}\rightarrow -\overline{1}}=n-m\right)\\
            &= P\left(T_{\underline{0}\rightarrow \overline{0}} + T_{\overline{0}\rightarrow -\overline{1}} =n \right),
        \end{align*}
        where we have used the Markov property in the third step. Note that by the strong Markov property, $\inf \{k> T_{\underline{0}\rightarrow \overline{0}} \hspace{1mm} | \hspace{1mm}S_k= -\overline{1}, S_{T_{\underline{0}\rightarrow \overline{0}}}= \overline{0} \}$ (which has the same distribution as $T_{\overline{0}\rightarrow -\overline{1}}$) is independent of $T_{\underline{0}\rightarrow \overline{0}}$. In particular,
        \[
        \mathbb{E}\left[\alpha^{T_{\underline{0}\rightarrow -\overline{1}}}\right] = \mathbb{E}\left[\alpha^{T_{\underline{0}\rightarrow \overline{0}}}\right] \mathbb{E}\left[\alpha^{T_{\overline{0}\rightarrow -\overline{1}}}\right] = \mathbb{E}\left[\alpha^{T_{\underline{0}\rightarrow \overline{0}}}\right]  \mathbb{E}\left[\alpha^{T_{\underline{0}\rightarrow \underline{1}}}\right],
        \]
        where the last equality follows from the reflection symmetry of the graph.
        
        Repeating the arguments made at the beginning for $T_{\underline{0}\rightarrow \overline{0}}$ we get that
        \[
        \mathbb{E}\left[\alpha^{T_{\underline{0} \rightarrow \underline{1}}}\right] = \alpha p \mathbb{E}\left[ \alpha^{T_{\underline{0}\rightarrow -\overline{1}}} \right]+(1-p)\alpha.
        \]
        Combining everything, 
        \[
        f(\alpha)=p\alpha + \frac{(1-p)^2 \alpha^2 f(\alpha)}{1-pf(\alpha)\alpha}
        \]
        or the second degree equation
        \[
        \alpha p f(\alpha)^2 +(\alpha^2-2p\alpha^2-1)f(\alpha) + p\alpha =0,
        \]
        which has the solutions
        \[
        \frac{1+2p\alpha^2-\alpha^2 \pm \sqrt{(1+2p\alpha^2-\alpha^2)^2-4p^2\alpha^2}}{2p\alpha}.
        \]
        The correct behaviour as $\alpha \rightarrow 0^+$ is obtained with the minus sign in front of the square root, so let us expand this one as a power series on $\alpha$ centered at zero:
        \begin{widetext}
            \begin{align*}
                f(\alpha)
                &= \frac{1+2p\alpha^2-\alpha^2 - \sqrt{(1+2p\alpha^2-\alpha^2)^2-4p^2\alpha^2}}{2p\alpha}\\
                &=\frac{-(1+2p\alpha^2-\alpha^2)\sum_{n=1}^\infty (-1)^n \binom{1/2}{n}\left(\frac{2p\alpha}{1+2p\alpha^2-\alpha^2}\right)^{2n}}{2p\alpha}\\
                &= \sum_{n=1}^\infty (-1)^{n+1}\binom{1/2}{n}\left(\frac{2p\alpha}{1+2p\alpha^2-\alpha^2}\right)^{2n-1}\\
                &= \sum_{n=1}^\infty (-1)^{n+1}\binom{1/2}{n} (2p)^{2n-1} \alpha^{2n-1} \left( \frac{1}{1+(2p-1)\alpha^2}\right)^{2n-1}\\
                &= \sum_{n=1}^\infty (-1)^{n+1}\binom{1/2}{n} (2p)^{2n-1} \alpha^{2n-1} \sum_{k=0}^\infty \binom{1-2n}{k} (2p-1)^k \alpha^{2k}\\
                &=\sum_{m=1}^\infty \sum_{n+k=m} (-1)^{n+1} \binom{1/2}{n}\binom{1-2n}{k} (2p)^{2n-1} (2p-1)^{k} \alpha^{2n+2k-1}\\
                &= \sum_{m=1}^\infty \sum_{n=1}^m (-1)^{n+1} \binom{1/2}{n} \binom{1-2n}{m-n} (2p)^{2n-1} (2p-1)^{m-n} \alpha^{2m-1},
            \end{align*}
            from which the statement follows. Note that
            \[
            \lim_{\alpha\rightarrow 1^-} f(\alpha) = \frac{2p- \sqrt{(2p)^2-4p^2}}{2p}=1,
            \]
            so that $P\left(T_{\underline{0}\rightarrow \overline{0}}<+\infty\right)=1$, like we had anticipated.
        \end{widetext}
    \end{proof}

    To get now the probability of successfully resetting at a particular time, we can just use the formula we just got and compute (as we did in the previous proof for $T_{\underline{0}\rightarrow \overline{-1}}$):
    \[
    P\left( T_{\underline{0}\rightarrow \underline{0}} = t \right) = \sum_{k+l=t} P\left(T_{\underline{0}\rightarrow \overline{0}}=k \right)  P\left(T_{\underline{0}\rightarrow \overline{0}}=l \right).
    \]
    The result is the following formula: 
    \begin{corollary}
        For all natural $m\geq 1$ and all $p\in [0,1]$, with the convention that $0^0=1$,
        \begin{align*}
            &P\left(T_{\underline{0}\rightarrow \underline{0}}=2m\right)=\\
            &= \sum_{k=1}^m \sum_{i=1}^k \sum_{j=1}^{m-k+1} (-1)^{i+j} \binom{1/2}{i} \binom{1/2}{j} \binom{1-2i}{k-i} \cdots \\
            & \cdots \binom{1-2j}{m-k+1-j} (2p)^{2(i+j-1)}(2p-1)^{m+1-(i+j)},\\
            &P\left(T_{\underline{0}\rightarrow \underline{0}}=2m-1\right) =0.
        \end{align*}
    
    \end{corollary}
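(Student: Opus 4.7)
The plan is to derive the Corollary as an almost immediate consequence of Theorem \ref{theorem} combined with the self-convolution identity already displayed in the text. The three key steps are: decompose the return time into two independent, identically distributed hitting times; use the parity statement of Theorem \ref{theorem} to kill the odd case; and substitute the closed form and reindex to recover the triple sum.

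For the decomposition, I would invoke the strong Markov property at the first visit to $\overline{0}$. Any closed trajectory returning to $\underline{0}$ must pass through $\overline{0}$, since the edges leaving $\underline{0}$ lead only to $\underline{1}$ or $\overline{0}$ and any trajectory staying in the bottom row drifts away monotonically. This yields $T_{\underline{0}\to\underline{0}} = T_{\underline{0}\to\overline{0}} + T_{\overline{0}\to\underline{0}}$ with independent summands. The reflection symmetry of the graph (flipping top and bottom rows along with the arrow orientations) swaps $\underline{0}$ and $\overline{0}$, so $T_{\overline{0}\to\underline{0}} \stackrel{d}{=} T_{\underline{0}\to\overline{0}}$, giving precisely the convolution identity quoted in the excerpt.

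Since Theorem \ref{theorem} asserts $P(T_{\underline{0}\to\overline{0}}=2a)=0$, only odd summands survive in the convolution. Writing $k=2a-1$ and $l=2b-1$, the total $k+l = 2(a+b)-2$ is automatically even, so $P(T_{\underline{0}\to\underline{0}}=2m-1)=0$, which settles the second line of the Corollary trivially. For $t=2m$, the constraint becomes $a+b=m+1$ with $a,b\geq 1$, reducing the convolution to a single-index sum over $a\in\{1,\dots,m\}$ with $b=m+1-a$.

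Finally, I would substitute the formula of Theorem \ref{theorem} into each factor: the first contributes an inner sum over $i\in\{1,\dots,a\}$, the second an inner sum over $j\in\{1,\dots,b\}$. Their product yields the sign $(-1)^{i+j}$ (the two $+1$ shifts cancel), the binomials $\binom{1/2}{i}\binom{1/2}{j}\binom{1-2i}{a-i}\binom{1-2j}{b-j}$, and the power $(2p)^{2(i+j-1)}(2p-1)^{(a-i)+(b-j)}$; using $a+b=m+1$, the last exponent collapses to $m+1-(i+j)$. Renaming $a$ as $k$ (so $b=m-k+1$) reproduces the triple sum verbatim. The only obstacle is purely bookkeeping: tracking the ranges and signs through the reparameterization. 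No analytic machinery beyond Theorem \ref{theorem} and the two structural observations (strong Markov plus reflection symmetry) is required.
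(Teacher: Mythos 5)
Your proposal is correct and follows essentially the same route as the paper: the paper likewise obtains the Corollary by convolving $T_{\underline{0}\to\overline{0}}$ with an independent copy of itself (via the strong Markov property and the reflection symmetry of the graph, exactly as used earlier for $T_{\underline{0}\to-\overline{1}}$) and then substituting the formula of Theorem \ref{theorem}. Your reindexing $a+b=m+1$, the sign and exponent bookkeeping, and the parity argument for the odd case all check out.
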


%\teo{In practice, to implement a trimmed protocol we prepare for the possibility that the systems goes on a superposition of all the possible paths of length $m$ and no more, and then perform this $m$-trimmed protocol twice to get the commutator squared. This means that the probability in Corollary $5$ is not exactly right, but only an upper bound, since it assumes we can go on paths such as (2m-2, wait, 2), which are not available to us if we always measure after $m$ time-steps to determine that the target is in state $[V,W]\ket{\psi}$. The important probability to consider here is therefore the one given by Theorem 3, and plotted in Figure 2 (a). One can just square that to get the probability of success in that case.}

\end{document}